\def\kron{\otimes}
\def\tr{\mathrm{tr}}
\def\diag{\mathrm{diag}}
\def\Htran{\mbox{\tiny H}}
\def\Ttran{\mbox{\tiny T}}
\newcommand{\fracSum}[1]{{\underset{{#1}}{\sum}}}
\newcommand{\fracSumtwo}[2]{\overset{#2}{\underset{#1}{\sum}}}
\newcommand{\vect}[1]{\mathbf{#1}}
\theoremstyle{remark}
\newtheorem{theorem}{Theorem}
\newtheorem{corollary}{Corollary}
\newtheorem{lemma}{Lemma}
\title{Massive MIMO Systems with Hardware-Constrained Base Stations}
\name{Emil~Bj\"ornson$^{\star \dagger}$ \qquad Michail Matthaiou$^{\ddagger}$ \qquad M\'erouane~Debbah$^{\star}$\thanks{E.~Bj\"ornson is funded by the International Postdoc Grant 2012-228 from the Swedish Research Council. This research has been supported by the ERC Starting Grant 305123 MORE (Advanced Mathematical Tools for Complex Network Engineering).}}
\address{$^\star$Alcatel-Lucent Chair on Flexible Radio, SUPELEC, Gif-sur-Yvette, France\\
$^\dagger$ACCESS Centre, Dept.~of Signal Processing, KTH Royal Institute of Technology, Stockholm, Sweden\\
$^\ddagger$ECIT Institute, Queen's University Belfast, U.K. and S2, Chalmers University of Technology, Sweden}
\begin{document}
\ninept

\maketitle

\begin{abstract}
Massive multiple-input multiple-output (MIMO) systems are cellular networks where the base stations (BSs) are equipped with unconventionally many antennas. Such large antenna arrays offer huge spatial degrees-of-freedom for transmission optimization; in particular, great signal gains, resilience to imperfect channel knowledge, and small inter-user interference are all achievable without extensive inter-cell coordination. The key to cost-efficient deployment of large arrays is the use of hardware-constrained base stations with low-cost antenna elements, as compared to today's expensive and power-hungry BSs. Low-cost transceivers are prone to hardware imperfections, but it has been conjectured that the excessive degrees-of-freedom of massive MIMO would bring robustness to such imperfections. We herein prove this claim for an uplink channel with multiplicative phase-drift, additive distortion noise, and noise amplification. Specifically, we derive a closed-form scaling law that shows how fast the imperfections increase with the number of antennas.
\end{abstract}

\begin{keywords}Achievable uplink rates, channel estimation, massive MIMO, scaling laws, transceiver hardware imperfections.
\end{keywords}

\vspace*{-1mm}

\section{Introduction}

\vspace*{-1mm}

Massive densification, in terms of more antennas per unit area, is a key enabler to higher area throughput in future wireless networks \cite{Hoydis2013c}. This is achieved using multi-user MIMO techniques, by adding more antennas to the macro BSs and/or distributing the antennas by ultra-dense deployment of small BSs. These approaches are non-conflicting, since the former operates in current frequency bands and the latter is expected to primarily operate in new mm-wave bands \cite{Baldemair2013a}.

This paper considers the former densification approach, which was first proposed in \cite{Marzetta2010a} and is nowadays commonly referred to as \emph{massive MIMO} \cite{Hoydis2013a,Larsson2014a}. The main characteristics of massive MIMO are that the BSs are equipped with large antenna arrays with hundreds (or even thousands) of antennas, which are used to serve tens (or even hundreds) of users. In other words, the number of antennas, $N$, and number of users per BS, $K$, are unconventionally large, but can differ by an order of magnitude. For this reason, massive MIMO brings unprecedented spatial degrees-of-freedom, which enable precoding with strong signal gains, give near-orthogonal user channels, and resilience to imperfect channel knowledge \cite{Rusek2013a}.

Apart from achieving high area throughput, recent works have investigated additional ways to capitalize on the huge degrees-of-freedom offered by massive MIMO. Towards this end, \cite{Hoydis2013c} showed that massive MIMO enables implicit coordination between systems that operate in the same band. Moreover, it was shown in \cite{Hoydis2013a} and \cite{Ngo2013a} that the transmit powers can be reduced as $1/\sqrt{N}$ with only a minor loss in throughput. This offers major reductions in the emitted power, but is actually bad from an energy efficiency (EE) perspective---the EE is maximized by increasing the transmit power with $N$ to compensate for the increasing circuit power \cite{Bjornson2014b}.

The work presented here explores whether the huge degrees-of-freedom offered by massive MIMO provide robustness to transceiver hardware imperfections; for example, phase drifts, quantization errors, and noise amplification. Robustness to hardware imperfections has been conjectured in overview articles, such as \cite{Larsson2014a}, and is notably important since the deployment cost of massive MIMO scales linearly with $N$ unless we resort to using cheaper hardware with larger imperfections. Constant envelope precoding was analyzed in \cite{Mohammed2013a} to facilitate the use of power-efficient amplifiers, while the impact of phase drifts was analyzed and simulated in \cite{Pitarokoilis2012a,Pitarokoilis2014a}. A preliminary proof of the conjecture was provided in \cite{Bjornson2014a}, but the authors considered only additive distortions and, thus, ignored other characteristics of hardware imperfections. It was shown that the distortion variance can increase as $\sqrt{N}$ with only minor throughput losses.

In this work, we consider an uplink massive MIMO system with hardware imperfections that cause phase drifts, additive distortions, and noise amplification; this model is more general compared to \cite{Pitarokoilis2012a,Pitarokoilis2014a,Bjornson2014a} which investigated merely one of these effects. We derive a new linear minimum mean square error (LMMSE) channel estimator and closed-form achievable user rates. Based on the analytical results, we prove the conjecture by obtaining intuitive scaling laws that show how fast the hardware imperfections can be increased with $N$. The results are validated numerically in a realistic simulation setup, while the impact on circuit design is considered in \cite{Bjornson2014c}.

\vspace*{-1mm}

\section{System Model}

\vspace*{-1mm}

This paper considers the uplink of a cellular network with $L\geq 1$ cells. Each cell consists of $K$ single-antenna user equipments (UEs) that communicate simultaneously with a BS which is equipped with an array of $N$ antennas. Our analysis holds for any $N$ and $K$, but we are primarily interested in massive MIMO topologies where $N \gg K \gg 1$. The channel from UE $k$ in cell $l$ to BS $j$ is denoted as $\vect{h}_{jlk} = [h_{jlk}^{(1)} \, \ldots \, h_{jlk}^{(N)}]^{\Ttran} \in \mathbb{C}^N$ and is modeled as Rayleigh block fading; thus, it takes a static realization for a coherence block of $T$ channel uses and independent realizations between blocks. Each channel is circularly symmetric complex Gaussian distributed with zero mean and covariance matrix $\lambda_{jlk} \vect{I}_N$: $\vect{h}_{jlk} \sim \mathcal{CN}(\vect{0},\lambda_{jlk} \vect{I}_N)$.\footnote{The assumption of independent fading implies that the array dimensions grow with $N$ to keep the inter-antenna distance sufficiently large. However, the analysis in this paper can be easily extended to spatially correlated channels as in \cite{Hoydis2013a} and \cite{Bjornson2014a}, but at the cost of complicating the notation and results.} The average channel attenuation $\lambda_{jlk}>0$ is different for each combinations of BS and UE and depends, for example, on the distance.

The received signal $\vect{y}_j(t) \in \mathbb{C}^{N}$ at BS $j$ at a given channel use $t \in \{ 1,\ldots,T \}$ in the coherence block is conventionally modeled as \cite{Marzetta2010a,Rusek2013a,Hoydis2013a,Ngo2013a}
\begin{equation} \label{eq:conventional-model}
\vect{y}_j(t) = \sum_{l=1}^{L} \vect{H}_{jl} \vect{x}_{l}(t) + \vect{n}_{j}(t)
\end{equation}
where the transmit signal in cell $l$ is $\vect{x}_{l}(t) = [x_{l1}(t) \, \ldots \, x_{lK}(t)]^{\Ttran} \in \mathbb{C}^{K}$ and $\vect{H}_{jl} = [ \vect{h}_{jl1} \, \ldots \, \vect{h}_{jlK}] \in \mathbb{C}^{N \times K}$. The signal $x_{lk}(t)$ sent by UE $k$ in cell $l$ at channel use $t$ is either a deterministic pilot symbol (used for channel estimation) or an information symbol from a Gaussian codebook; in any case, the expectation of the transmit power is bounded as $\mathbb{E}\{ |x_{lk}(t) |^2 \} \leq p_{lk}$. The thermal noise vector $\vect{n}_{j}(t) \sim \mathcal{CN}(\vect{0},\sigma^2 \vect{I}_N)$ is independent in time and has variance $\sigma^2$.

The conventional model in \eqref{eq:conventional-model} is well-accepted for small-scale MIMO systems, but has an important drawback when applied to massive MIMO: it assumes that the BS array consists of $N$ high-quality antenna elements which are all fully synchronized. Consequently, the cost and circuit power consumption would \emph{at least} grow linearly with $N$, thus making the deployment of massive MIMO rather questionable from an overall cost and efficiency perspective.

In this paper, we analyze the far more realistic scenario of hardware-constrained BSs. Specifically, each BS has hardware imperfections that distort the communication in three ways: 1) received signals are shifted in phase; 2) distortion noise is added with a power proportional to the total received signal power; and 3) amplification of the thermal noise. In this generalized scenario, the received signal at BS $j$ at a given channel use $t \in \{ 1,\ldots,T \}$  is modeled as
\begin{equation} \label{eq:generalized-model}
\vect{y}_j(t) = \vect{D}_{\boldsymbol{\phi}_j(t)} \sum_{l=1}^{L} \vect{H}_{jl} \vect{x}_{l}(t) + \boldsymbol{\upsilon}_{j}(t) + \boldsymbol{\eta}_{j}(t)
\end{equation}
where the channel matrices $\vect{H}_{jl}$ and transmitted signals $\vect{x}_{l}(t)$ are exactly as in \eqref{eq:conventional-model}.
The hardware imperfections are characterized by:
\begin{enumerate}

\item The phase-drift matrix $\vect{D}_{\boldsymbol{\phi}_j(t)} \! \triangleq \! \diag(e^{\imath \phi_{j1}(t)},\ldots, e^{\imath \phi_{jN}(t)})$ where $\phi_{jn}(t)$ is the phase drift at the $n$th antenna of BS $j$ at time $t$. It follows a Wiener process $\phi_{jn}(t) \!\sim\! \mathcal{N}( \phi_{jn}(t\!-\!1), \delta)$; thus, $\phi_{jn}(t)$ equals $\phi_{jn}(t-1)$ plus an independent Gaussian innovation of variance $\delta$. Each antenna experiences an independent phase-drift process with the same variance (e.g., due to the use of separate oscillators with identical properties).

\item The distortion noise $\boldsymbol{\upsilon}_{j}(t) \sim \mathcal{CN}(\vect{0},\vect{\Upsilon}_j(t) )$ where $\vect{\Upsilon}_j(t) \triangleq \kappa^2 \sum_{l=1}^{L} \sum_{k=1}^{K} \mathbb{E}\{ |x_{lk}(t) |^2 \} \diag( |h_{jlk}^{(1)}|^2,\ldots, |h_{jlk}^{(N)}|^2)$ for a given channel realization. The distortion noise is thus independent between antennas and channel uses, and the variance at a given antenna is proportional to the current received signal power at this antenna. The proportionality parameter $\kappa \geq 0$ is the error vector magnitude (EVM), which is a common quality measure of transceiver hardware \cite{Holma2011a}.

\item The receiver noise $\boldsymbol{\eta}_{j}(t) \!=\! \sqrt{\xi} \vect{n}_{j}(t)  \sim \mathcal{CN}(\vect{0},\sigma^2 \xi \vect{I}_N)$ where the parameter $\xi \geq 1$ is the noise amplification factor.
\end{enumerate}

The generalized system model in \eqref{eq:generalized-model} is based on \cite{Schenk2008a,Mezghani2010a,wenk2010mimo,Bjornson2013d} and characterizes the joint behavior of different types of hardware imperfections at the BSs. For example, phase noise in the oscillators causes phase-drifts, finite-resolution analog-to-digital converters cause distortion noise, and the electronic BS amplifier causes noise amplification. The distortions either originate from uncalibrated hardware imperfections or residual errors after calibration.

We will derive a channel estimator and achievable user rates for the system model in \eqref{eq:generalized-model}. By analyzing the performance as $N\rightarrow\infty$, we will bring insights into the fundamental impact of the parameters $\delta$, $\kappa$, and $\xi$, which characterize the BS hardware imperfections.

\section{Performance Analysis}
\label{sec:performance-analysis}

In this section, we derive achievable user rates for the uplink system in \eqref{eq:generalized-model} and scaling laws for how quickly the BS hardware imperfections can be increased with $N$ and still achieve non-zero rates.

\subsection{Channel Estimation}

The achievable rates are computed under the assumption that the first $B \geq K$ channel uses of each coherence block are dedicated for pilot-based channel estimation. UE $k$ in cell $j$ transmits a predefined pilot sequence $\tilde{\vect{x}}_{jk} = [x_{jk}(1) \, \ldots \, x_{jk}(B)]^{\Ttran} \in \mathbb{C}^{B}$. The pilot sequences are selected arbitrarily under the above mentioned power constraints and our analysis supports any choice. However, it is reasonable to make the sequences $\tilde{\vect{x}}_{j1},\ldots,\tilde{\vect{x}}_{jK}$ in cell $j$ linearly independent to avoid unnecessary intra-cell interference. Due to the limited coherence block length $T$, inter-cell interference is often unavoidable but the pilot sequences can also be designed and allocated to also reduce  inter-cell interference \cite{Yin2013a}.

For any given set of pilot sequences, we now derive estimators of the effective channels $\vect{h}_{jlk}(t) \triangleq \vect{D}_{\boldsymbol{\phi}_j(t)} \vect{h}_{jlk}$ at any channel use $t \geq B$ for all $j,l,k$. The conventional multi-antenna channel estimators from \cite{Kay1993a,Kotecha2004a,Bjornson2010a} cannot be applied in this work since the generalized system model in \eqref{eq:generalized-model} has two non-standard properties: The pilot transmission is corrupted by random phase-drifts and the distortion noise is statistically dependent on the channels. Therefore, we derive a new LMMSE estimator for the system model at hand.

\begin{theorem} \label{theorem:LMMSE-estimation}
Let $\boldsymbol{\psi}_j = [\vect{y}_j^{\Ttran}(1) \, \ldots \, \vect{y}_j^{\Ttran}(B)]^{\Ttran} \in \mathbb{C}^{NB}$ denote the received signal at BS $j$ from the pilot transmission. The LMMSE estimate of $\vect{h}_{jlk}(t)$ at any channel use $t\geq B$ for any $l$ and $k$ is
\begin{equation} \label{eq:LMMSE-estimator}
  \hat{\vect{h}}_{jlk}(t) = \left( \lambda_{jlk} \tilde{\vect{x}}_{lk}^{\Htran}  \vect{D}_{\boldsymbol{\delta}(t)} \boldsymbol{\Psi}^{-1}_j \kron \vect{I}_N \right) \boldsymbol{\psi}_j
\end{equation}
where $\vect{D}_{\boldsymbol{\delta}(t)} \triangleq \diag( e^{-\frac{\delta}{2} (t-1)}, e^{-\frac{\delta}{2} (t-2)}, \ldots, e^{-\frac{\delta}{2} (t-B)})$,
\begin{equation}
\boldsymbol{\Psi}_j \triangleq \sum_{\ell=1}^{L} \sum_{m=1}^{K} \lambda_{j \ell m} \vect{X}_{\ell m} + \sigma^2 \xi \vect{I}_B,
\end{equation}
$\kron$ is the Kronecker product, and element $(i_1,i_2)$ of $\vect{X}_{\ell m} \!\in\! \mathbb{C}^{B \times B}$ is
\begin{equation}
[\vect{X}_{\ell m} ]_{i_1,i_2} = \begin{cases} |x_{\ell m}(i_1)|^2 (1\!+\!\kappa^2) , & i_1 = i_2, \\ x_{\ell m}(i_1) x_{\ell m}^*(i_2) e^{-\frac{\delta}{2} |i_1-i_2|}, &  i_1 \neq i_2. \end{cases}
\end{equation}
The corresponding error covariance matrix is
\begin{equation} \label{eq:LMMSE-error-cov}
\begin{split}
\vect{C}_{jlk} &= \mathbb{E}\left\{ ( \vect{h}_{jlk}(t) -\hat{\vect{h}}_{jlk}(t) )( \vect{h}_{jlk}(t) - \hat{\vect{h}}_{jlk}(t) )^{\Htran}   \right\} \\ &= \lambda_{jlk} \left( 1 - \lambda_{jlk} \tilde{\vect{x}}_{lk}^{\Htran}  \vect{D}_{\boldsymbol{\delta}(t)} \boldsymbol{\Psi}^{-1}_j \vect{D}_{\boldsymbol{\delta}(t)}^{\Htran} \tilde{\vect{x}}_{lk} \right) \vect{I}_N
\end{split}
\end{equation}
and the mean-squared error (MSE) becomes $\mathrm{MSE}_{jlk} = \tr( \vect{C}_{jlk})$.
\end{theorem}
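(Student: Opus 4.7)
The plan is to apply the standard LMMSE formula $\hat{\vect{h}}_{jlk}(t) = \mathbb{E}\{\vect{h}_{jlk}(t) \boldsymbol{\psi}_j^{\Htran}\} \bigl(\mathbb{E}\{\boldsymbol{\psi}_j \boldsymbol{\psi}_j^{\Htran}\}\bigr)^{-1} \boldsymbol{\psi}_j$, which is valid for any jointly zero-mean quantities (Gaussianity is not needed, only second-order statistics). The whole proof therefore reduces to computing two correlation matrices and plugging them in, which explains why a new derivation is needed despite the nonstandard phase drifts and channel-dependent distortion noise. The one analytic input that I will use repeatedly is the Wiener-process identity $\mathbb{E}\{e^{\imath(\phi_{jn}(t)-\phi_{jn}(s))}\} = e^{-\frac{\delta}{2}|t-s|}$, which follows from the characteristic function of a zero-mean Gaussian of variance $\delta|t-s|$.

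First I will compute the cross-covariance. Expanding $\vect{y}_j(s)$ and using that: (i) $\boldsymbol{\eta}_j(s)$ is independent of everything; (ii) $\mathbb{E}\{\boldsymbol{\upsilon}_j(s)\mid \vect{H}_{jl}\}=\vect{0}$, so by the tower rule $\boldsymbol{\upsilon}_j(s)$ contributes nothing to $\mathbb{E}\{\vect{h}_{jlk}(t) \vect{y}_j^{\Htran}(s)\}$; (iii) channels of different antennas and different UEs are mutually independent. Only the $(\ell,m)=(l,k)$ term and the diagonal $n=n'$ entries survive, and the phase factor contributes $e^{-\frac{\delta}{2}(t-s)}$ since $t\ge B\ge s$. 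This gives the block-diagonal form $\mathbb{E}\{\vect{h}_{jlk}(t) \vect{y}_j^{\Htran}(s)\} = \lambda_{jlk}\, x_{lk}^*(s) e^{-\frac{\delta}{2}(t-s)} \vect{I}_N$, which, stacked over $s=1,\dots,B$, is exactly $\lambda_{jlk}\bigl(\tilde{\vect{x}}_{lk}^{\Htran} \vect{D}_{\boldsymbol{\delta}(t)}\bigr)\kron \vect{I}_N$.

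Next I will compute $\mathbb{E}\{\vect{y}_j(i_1) \vect{y}_j^{\Htran}(i_2)\}$ term by term. The signal-signal contribution: independence across antennas keeps only the $n=n'$ diagonal, independence across $(\ell,m)$ collapses the double sum to a single sum, and the phase factor brings $e^{-\frac{\delta}{2}|i_1-i_2|}$ (equal to $1$ when $i_1=i_2$). The distortion contribution: using the tower rule with the conditional covariance $\vect{\Upsilon}_j(i_1)$ and $\mathbb{E}\{|h_{j\ell m}^{(n)}|^2\}=\lambda_{j\ell m}$, it gives $\kappa^2 \sum_{\ell,m}\lambda_{j\ell m}|x_{\ell m}(i_1)|^2 \vect{I}_N$ on the diagonal $i_1=i_2$ and nothing off-diagonal because $\boldsymbol{\upsilon}_j$ is temporally independent. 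The noise contribution adds $\sigma^2\xi \vect{I}_N$ on the diagonal. Packaging these together yields $\mathbb{E}\{\boldsymbol{\psi}_j \boldsymbol{\psi}_j^{\Htran}\} = \boldsymbol{\Psi}_j \kron \vect{I}_N$ with exactly the $\vect{X}_{\ell m}$ defined in the theorem; the key observation is that the $(1+\kappa^2)$ on the diagonal is the sum of the signal self-term and the distortion term.

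The two expressions are then assembled via the Kronecker mixed-product rule $(A\kron B)(C\kron D)=(AC)\kron(BD)$ and $(\boldsymbol{\Psi}_j \kron \vect{I}_N)^{-1} = \boldsymbol{\Psi}_j^{-1}\kron\vect{I}_N$, which immediately produces \eqref{eq:LMMSE-estimator}. For the error covariance I will use $\vect{C}_{jlk} = \mathbb{E}\{\vect{h}_{jlk}(t)\vect{h}_{jlk}^{\Htran}(t)\} - \mathbb{E}\{\vect{h}_{jlk}(t)\boldsymbol{\psi}_j^{\Htran}\}\bigl(\mathbb{E}\{\boldsymbol{\psi}_j \boldsymbol{\psi}_j^{\Htran}\}\bigr)^{-1}\mathbb{E}\{\boldsymbol{\psi}_j \vect{h}_{jlk}^{\Htran}(t)\}$, noting that the first term is simply $\lambda_{jlk}\vect{I}_N$ since $\vect{D}_{\boldsymbol{\phi}_j(t)}\vect{D}_{\boldsymbol{\phi}_j(t)}^{\Htran}=\vect{I}_N$, and the second term reduces to a scalar times $\vect{I}_N$ through the same Kronecker manipulation, yielding \eqref{eq:LMMSE-error-cov}. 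The main obstacle, and the reason this cannot simply cite \cite{Kay1993a,Kotecha2004a,Bjornson2010a}, is the joint bookkeeping of three nonstandard effects at once: ensuring that the random phase drift interacts correctly with the pilot symbols through the $\vect{D}_{\boldsymbol{\delta}(t)}$ matrix, that the channel-dependent distortion covariance is handled via iterated expectation, and that the $(1+\kappa^2)$ factor appears only on the diagonal of $\vect{X}_{\ell m}$ because $\boldsymbol{\upsilon}_j$ is temporally white while the signal term is not.
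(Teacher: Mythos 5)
Your proposal is correct and follows exactly the route of the paper's proof: both invoke the standard LMMSE formula $\hat{\vect{h}}_{jlk}(t) = \mathbb{E}\{ \vect{h}_{jlk}(t) \boldsymbol{\psi}_j^{\Htran} \} ( \mathbb{E}\{ \boldsymbol{\psi}_j \boldsymbol{\psi}_j^{\Htran} \} )^{-1} \boldsymbol{\psi}_j$ from \cite{Kay1993a} and then reduce the theorem to computing the two second-order moments. The paper leaves that "algebraic computation" implicit, whereas you carry it out; your key steps (the Wiener-process identity $\mathbb{E}\{e^{\imath(\phi_{jn}(t)-\phi_{jn}(s))}\}=e^{-\frac{\delta}{2}|t-s|}$, the tower rule for the channel-dependent distortion noise, and the Kronecker factorization $\mathbb{E}\{\boldsymbol{\psi}_j \boldsymbol{\psi}_j^{\Htran}\}=\boldsymbol{\Psi}_j\kron\vect{I}_N$) are all correct and yield the stated expressions.
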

\begin{proof}
The general expression for an LMMSE estimator is $\hat{\vect{h}}_{jlk}(t) = \mathbb{E}\{ \vect{h}_{jlk}(t) \boldsymbol{\psi}_j^{\Htran} \}  \left( \mathbb{E}\{ \boldsymbol{\psi}_j \boldsymbol{\psi}_j^{\Htran} \}  \right)^{-1}  \boldsymbol{\psi}_j$ \cite[Chapter 12]{Kay1993a}. The theorem follows from algebraic computation of the two expectations.
\end{proof}

Although the channels are block fading, the phase-drifts caused by hardware imperfections make the effective channels $\vect{h}_{jlk}(t)$ change between channel uses. The new LMMSE estimator in Theorem \ref{theorem:LMMSE-estimation} predicts the effective channel for each $t \in \{B+1,\ldots,T\}$ during the data transmission. Next, we use these predictors to design receive filters and derive the corresponding achievable user rates.

\begin{figure*}[t!]
\begin{align} \label{eq:achievable-SINR}
\mathrm{SINR}_{jk}(t) =  \frac{ p_{jk} | \mathbb{E}\{ \vect{v}_{jk}^{\Htran}(t) \vect{h}_{jjk}(t) \} |^2 }{ \fracSumtwo{l=1}{L} \fracSumtwo{m=1}{K} p_{lm}  \mathbb{E}\{ |\vect{v}_{jk}^{\Htran}(t) \vect{h}_{jlm}(t) |^2  \} - p_{jk} | \mathbb{E}\{ \vect{v}_{jk}^{\Htran}(t) \vect{h}_{jjk}(t) \} |^2 + \mathbb{E}\{ |\vect{v}_{jk}^{\Htran}(t) \boldsymbol{\upsilon}_j(t) |^2  \}   + \sigma^2 \xi \mathbb{E}\{ \| \vect{v}_{jk}(t) \|^2\}  } \tag{8}
\end{align} \vskip-2mm
\hrulefill
\vskip-5mm
\end{figure*}

\subsection{Achievable User Rates}
\label{subsec:user-rates}

Achievable user rates for the generalized uplink channel in \eqref{eq:generalized-model} are given in the next lemma. These form a base for asymptotic analysis.

\begin{lemma} \label{lemma:achievable-rates}
Suppose BS $j$ has statistical channel knowledge and applies the filters $\vect{v}_{jk}^{\Htran}(t) \in \mathbb{C}^{N}$, $t=B+1,\ldots,T$, to receive the signals from its $k$th UE, then an ergodic achievable user rate is \vskip-4mm
\begin{equation} \label{eq:achievable-rate}
R_{jk} = \frac{1}{T}  \sum_{t=B+1}^{T} \log_2 \left( 1 + \mathrm{SINR}_{jk}(t) \right) \quad [\textrm{bit/channel use}]
\end{equation}
where $\mathrm{SINR}_{jk}(t)$ is given in \eqref{eq:achievable-SINR} at the top of this page and all UEs transmit with full power (i.e., $\mathbb{E}\{ |x_{lk}(t) |^2 \} = p_{lk}$ for all $l,k$).
\end{lemma}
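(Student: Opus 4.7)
The natural strategy is the \emph{use-and-then-forget} (UatF) capacity bound popularized by Marzetta, Hassibi and co-authors: pretend that the decoder knows only the deterministic scalar $\mathbb{E}\{\vect{v}_{jk}^{\Htran}(t)\vect{h}_{jjk}(t)\}$ as its ``effective channel,'' and lump everything else into an equivalent additive noise whose second moment is easy to evaluate. First, I would apply the filter $\vect{v}_{jk}^{\Htran}(t)$ to the received signal \eqref{eq:generalized-model} and substitute $\vect{h}_{jlm}(t) = \vect{D}_{\boldsymbol{\phi}_j(t)}\vect{h}_{jlm}$, yielding
\begin{equation*}
\vect{v}_{jk}^{\Htran}(t)\vect{y}_j(t) = \sum_{l=1}^{L}\sum_{m=1}^{K} \vect{v}_{jk}^{\Htran}(t)\vect{h}_{jlm}(t)\, x_{lm}(t) + \vect{v}_{jk}^{\Htran}(t)\boldsymbol{\upsilon}_j(t) + \vect{v}_{jk}^{\Htran}(t)\boldsymbol{\eta}_j(t).
\end{equation*}
Splitting off the $(l,m)=(j,k)$ term and adding/subtracting its statistical mean gives
\begin{equation*}
\vect{v}_{jk}^{\Htran}(t)\vect{y}_j(t) = \mathbb{E}\{\vect{v}_{jk}^{\Htran}(t)\vect{h}_{jjk}(t)\}\, x_{jk}(t) + z_{jk}(t),
\end{equation*}
where $z_{jk}(t)$ collects the residual self-channel fluctuation, all other-UE terms, the distortion noise, and the amplified receiver noise.

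Next I would invoke the standard lower bound (see, e.g., Hassibi--Hochwald or the Marzetta--Ngo papers on massive MIMO) that says: for any additive-noise channel $y = g x + z$ in which $x$ is zero-mean Gaussian with variance $p$, $g$ is a deterministic coefficient, and $z$ is uncorrelated with $x$ with variance $\mathbb{E}\{|z|^2\}$, the mutual information is bounded below by $\log_2(1 + \frac{|g|^2 p}{\mathbb{E}\{|z|^2\}})$, achieved by the worst-case Gaussian noise of the same variance. To apply it I must (i) verify that $z_{jk}(t)$ is uncorrelated with $x_{jk}(t)$, and (ii) compute $\mathbb{E}\{|z_{jk}(t)|^2\}$. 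For (i), each term in $z_{jk}(t)$ is either associated with a symbol $x_{lm}(t)$ with $(l,m)\neq(j,k)$ (independent of $x_{jk}(t)$ by the codebook assumption) or is the zero-mean residual $(\vect{v}_{jk}^{\Htran}(t)\vect{h}_{jjk}(t) - \mathbb{E}\{\cdot\}) x_{jk}(t)$, which is uncorrelated with $x_{jk}(t)$ precisely because it has been centered; $\boldsymbol{\upsilon}_j$ and $\boldsymbol{\eta}_j$ are zero-mean and independent of the codebook symbol.

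For (ii) I would expand $\mathbb{E}\{|z_{jk}(t)|^2\}$ by bilinearity. Using pairwise independence across UEs, the cross terms vanish and one gets the sum $\sum_{l,m} p_{lm}\mathbb{E}\{|\vect{v}_{jk}^{\Htran}(t)\vect{h}_{jlm}(t)|^2\}$ minus the ``subtracted mean square'' $p_{jk}|\mathbb{E}\{\vect{v}_{jk}^{\Htran}(t)\vect{h}_{jjk}(t)\}|^2$, plus $\mathbb{E}\{|\vect{v}_{jk}^{\Htran}(t)\boldsymbol{\upsilon}_j(t)|^2\}$, plus $\sigma^2\xi\,\mathbb{E}\{\|\vect{v}_{jk}(t)\|^2\}$ (using $\boldsymbol{\eta}_j(t)\sim\mathcal{CN}(\vect{0},\sigma^2\xi\vect{I}_N)$ independently of $\vect{v}_{jk}(t)$). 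This is exactly the denominator of \eqref{eq:achievable-SINR}, while the numerator is $p_{jk}|\mathbb{E}\{\vect{v}_{jk}^{\Htran}(t)\vect{h}_{jjk}(t)\}|^2$. Finally, summing the per-channel-use rate over the $T-B$ data symbols and dividing by the block length $T$ accounts for the pilot overhead and produces \eqref{eq:achievable-rate}.

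The main obstacle is point (i) above: one must be careful that the distortion noise $\boldsymbol{\upsilon}_j(t)$, whose covariance $\vect{\Upsilon}_j(t)$ depends on the channels and the transmit powers, is still uncorrelated with $x_{jk}(t)$ in the sense required by the worst-case Gaussian-noise bound. This holds because $\boldsymbol{\upsilon}_j(t)$ is conditionally zero-mean given $\{\vect{h}_{jlm}\}$ and is generated by fresh randomness independent of the codebook symbols, so $\mathbb{E}\{x_{jk}^*(t)\vect{v}_{jk}^{\Htran}(t)\boldsymbol{\upsilon}_j(t)\}=0$; the rest is bookkeeping of second moments.
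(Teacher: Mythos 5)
Your proposal is correct and follows essentially the same route as the paper: the paper's (much terser) proof likewise computes a per-channel-use rate for each $t$, treats interference and distortion as worst-case Gaussian noise, exploits only the average effective channel $\mathbb{E}\{ \vect{v}_{jk}^{\Htran}(t) \vect{h}_{jjk}(t) \}$ with the deviation absorbed into the noise, and averages over the coherence block. You have simply written out the use-and-forget decomposition and the uncorrelatedness checks that the paper delegates to its cited references.
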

\begin{proof}
Since the effective channels vary with $t$, we follow the approach in \cite{Pitarokoilis2012a,Pitarokoilis2014a} and compute an achievable rate for each $t$. We obtain \eqref{eq:achievable-rate} by averaging over the coherence block. The SINR in \eqref{eq:achievable-SINR} is obtained by treating inter-user interference and additive distortions as Gaussian noise (a worst-case assumption \cite{Hassibi2003a}) and only exploiting knowledge of the average effective channel $\mathbb{E}\{ \vect{v}_{jk}^{\Htran}(t) \vect{h}_{jjk}(t) \}$ while any deviation is treated as worst-case Gaussian noise \cite{Medard2000a,Marzetta2010a}.
\end{proof}

\setcounter{equation}{8}

The rate expressions in Lemma \ref{lemma:achievable-rates} can be utilized for any choice of receive filters. The next theorem gives closed-form expressions for all expectations under maximum ratio combining (MRC).

\begin{theorem} \label{theorem:MRC-expectations}
If the MRC filter $\vect{v}_{jk}(t) =  \hat{\vect{h}}_{jjk}(t)$ is used, then
\begin{align}
\mathbb{E}\{ \|\vect{v}_{jk}(t) \|^2\} & = N \lambda_{jjk}^2 \tilde{\vect{x}}_{jk}^{\Htran}  \vect{D}_{\boldsymbol{\delta}(t)} \boldsymbol{\Psi}^{-1}_j \vect{D}_{\boldsymbol{\delta}(t)}^{\Htran} \tilde{\vect{x}}_{jk}  \label{eq:MRC-squared-norm} \\
\mathbb{E}\{ \vect{v}_{jk}^{\Htran}(t) \vect{h}_{jjk}(t) \} & = \mathbb{E}\{ \|\vect{v}_{jk}(t) \|^2\}  \label{eq:MRC-first-moment} \\
\mathbb{E}\{ | \vect{v}_{jk}^{\Htran}(t) \vect{h}_{jlm}(t) |^2 \} &= \lambda_{jlm} \mathbb{E}\{ \| \vect{v}_{jk}(t) \|^2\}  \notag  \\
&\!\!\!\!\!\!\!\!\!\!\!\!\!\!\!\!\!\!\!\!\!\!\!\!\!\!\!\!\!  + N  \lambda_{jjk}^2 \lambda_{jlm}^2 \tilde{\vect{x}}_{jk}^{\Htran}   \vect{D}_{\boldsymbol{\delta}(t)} \boldsymbol{\Psi}_j^{-1} \vect{X}_{lm}  \boldsymbol{\Psi}_j^{-1} \vect{D}_{\boldsymbol{\delta}(t)}^{\Htran} \tilde{\vect{x}}_{jk} \notag \\
&\!\!\!\!\!\!\!\!\!\!\!\!\!\!\!\!\!\!\!\!\!\!\!\!\!\!\!\!\!  + N(N-1)  \lambda_{jjk}^2 \lambda_{jlm}^2 |\tilde{\vect{x}}_{jk}^{\Htran}  \vect{D}_{\boldsymbol{\delta}(t)} \boldsymbol{\Psi}_j^{-1}  \vect{D}_{\boldsymbol{\delta}(t)}^{\Htran} \tilde{\vect{x}}_{lm} |^2 \label{eq:MRC-second-moment} \\
\mathbb{E}\{ |\vect{v}_{jk}^{\Htran}(t) \boldsymbol{\upsilon}_j(t) |^2  \}&= \kappa^2 \sum_{l=1}^{L} \sum_{m=1}^{K} p_{lm} \lambda_{jlm} \mathbb{E}\{ \|\vect{v}_{jk}(t) \|^2\} \label{eq:MRC-cross-moment}  \\
&\!\!\!\!\!\!\!\!\!\!\!\!\!\!\!\!\!\!\!\!\!\!\!\!\!\!\!\!\!\!\!\!\!\!\!\!\!\!\!\!\!\!\!\!\!\!\!\!\!\!\! + \kappa^2 \sum_{l=1}^{L} \sum_{m=1}^{K} p_{lm} N  \lambda_{jjk}^2 \lambda_{jlm}^2 \tilde{\vect{x}}_{jk}^{\Htran}   \vect{D}_{\boldsymbol{\delta}(t)} \boldsymbol{\Psi}_j^{-1} \vect{X}_{lm}  \boldsymbol{\Psi}_j^{-1} \vect{D}_{\boldsymbol{\delta}(t)}^{\Htran} \tilde{\vect{x}}_{jk}. \notag
\end{align}
\end{theorem}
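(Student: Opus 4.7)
The plan is to compute each of the four moments directly, leveraging the LMMSE structure from Theorem~\ref{theorem:LMMSE-estimation} together with the fact that every per-antenna random quantity---channel coefficients $h_{jlk}^{(n)}$, phase drifts $\phi_{jn}(t)$, distortion entries $\upsilon_j^{(n)}(t)$, and receiver noise entries $\eta_j^{(n)}(t)$---is independent across the antenna index $n$ with identical marginal distribution. This antenna-wise iid structure is what ultimately produces the $N$ versus $N(N-1)$ scalings in \eqref{eq:MRC-second-moment}.

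The first two identities are corollaries of LMMSE orthogonality. For \eqref{eq:MRC-squared-norm}, the random unitary $\vect{D}_{\boldsymbol{\phi}_j(t)}$ preserves the channel covariance, so $\mathbb{E}\{\vect{h}_{jjk}(t)\vect{h}_{jjk}^{\Htran}(t)\} = \lambda_{jjk}\vect{I}_N$, and the standard LMMSE identity $\mathbb{E}\{\hat{\vect{h}}_{jjk}(t)\hat{\vect{h}}_{jjk}^{\Htran}(t)\} = \lambda_{jjk}\vect{I}_N - \vect{C}_{jjk}$ yields \eqref{eq:MRC-squared-norm} directly from \eqref{eq:LMMSE-error-cov} after taking the trace. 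For \eqref{eq:MRC-first-moment}, orthogonality of $\hat{\vect{h}}_{jjk}$ and $\vect{h}_{jjk} - \hat{\vect{h}}_{jjk}$ gives $\mathbb{E}\{\hat{\vect{h}}_{jjk}^{\Htran}\vect{h}_{jjk}\} = \mathbb{E}\{\|\hat{\vect{h}}_{jjk}\|^2\}$.

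For \eqref{eq:MRC-second-moment} and \eqref{eq:MRC-cross-moment} the main device is writing everything as sums over antennas and then using independence. Expanding $|\vect{v}_{jk}^{\Htran}(t)\vect{h}_{jlm}(t)|^2 = \sum_{n,n'} v_{jk}^{(n)*}(t)h_{jlm}^{(n)}(t) v_{jk}^{(n')}(t)h_{jlm}^{(n')*}(t)$, the $n \neq n'$ terms factor by antenna independence into $N(N-1)\,|\mathbb{E}\{v_{jk}^{(1)*}(t)h_{jlm}^{(1)}(t)\}|^2$. Computing $\mathbb{E}\{\vect{v}_{jk}^{\Htran}(t)\vect{h}_{jlm}(t)\}$ once from the cross-covariance $\mathbb{E}\{\vect{h}_{jlm}(t)\boldsymbol{\psi}_j^{\Htran}\} = \lambda_{jlm}(\tilde{\vect{x}}_{lm}^{\Htran}\vect{D}_{\boldsymbol{\delta}(t)}) \kron \vect{I}_N$ together with $\mathbb{E}\{\boldsymbol{\psi}_j\boldsymbol{\psi}_j^{\Htran}\} = \boldsymbol{\Psi}_j \kron \vect{I}_N$ produces the pilot-contamination term. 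For \eqref{eq:MRC-cross-moment} I would first condition on the channels to replace $\boldsymbol{\upsilon}_j(t)$ by its conditional covariance $\vect{\Upsilon}_j(t)$, reducing $\mathbb{E}\{|\vect{v}_{jk}^{\Htran}\boldsymbol{\upsilon}_j|^2\}$ to $\kappa^2 \sum_{l,m} p_{lm} \sum_n \mathbb{E}\{|v_{jk}^{(n)}(t)|^2|h_{jlm}^{(n)}|^2\}$.

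The main obstacle is the per-antenna fourth moment $\mathbb{E}\{|v_{jk}^{(1)}(t)|^2|h_{jlm}^{(1)}|^2\}$ that governs the diagonal ($n=n'$) contribution to \eqref{eq:MRC-second-moment} and the second term of \eqref{eq:MRC-cross-moment}. Writing $v_{jk}^{(1)}(t) = \sum_\tau w_\tau \psi_j^{(1)}(\tau)$ with weights $w_\tau$ read off from Theorem~\ref{theorem:LMMSE-estimation}, one must evaluate $\mathbb{E}\{\psi_j^{(1)}(\tau_1)\psi_j^{(1)*}(\tau_2)|h_{jlm}^{(1)}|^2\}$ for every pair of pilot times. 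This calls for careful tracking of three couplings: the Wiener increment identity $\mathbb{E}\{e^{\imath(\phi_{j1}(\tau_1) - \phi_{j1}(\tau_2))}\} = e^{-\delta|\tau_1-\tau_2|/2}$ that underlies the off-diagonal structure of $\vect{X}_{\ell m}$; the signal-dependent distortion variance that inflates the diagonal of $\vect{X}_{\ell m}$ by $(1+\kappa^2)$; and the complex-Gaussian fourth-moment identity (Wick's theorem) applied to the channels, which isolates the $(\ell,r) = (l,m)$ summand and upgrades it to the quadratic form $\tilde{\vect{x}}_{jk}^{\Htran}\vect{D}_{\boldsymbol{\delta}(t)}\boldsymbol{\Psi}_j^{-1}\vect{X}_{lm}\boldsymbol{\Psi}_j^{-1}\vect{D}_{\boldsymbol{\delta}(t)}^{\Htran}\tilde{\vect{x}}_{jk}$, while the remaining summands reassemble into $\lambda_{jlm}\mathbb{E}\{\|\vect{v}_{jk}\|^2\}/N$ per antenna. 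Summing over $n$ gives the first two terms of \eqref{eq:MRC-second-moment}, and the analogous bookkeeping with the extra weights $p_{lm}$ yields \eqref{eq:MRC-cross-moment}.
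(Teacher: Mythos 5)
Your proposal is correct, and it supplies exactly the direct ``algebraic computation'' that the paper itself omits (the published proof of Theorem~\ref{theorem:MRC-expectations} consists only of the remark that the expectations are straightforward to compute). The key ingredients you identify---the antenna-wise i.i.d.\ structure inherited from the Kronecker form of the estimator (which yields the $N$ vs.\ $N(N-1)$ split), LMMSE orthogonality for \eqref{eq:MRC-squared-norm}--\eqref{eq:MRC-first-moment}, conditioning on the channels to replace $\boldsymbol{\upsilon}_j(t)$ by $\vect{\Upsilon}_j(t)$, and the complex-Gaussian fourth-moment identity that produces the extra $\lambda_{jlm}^2 \vect{X}_{lm}$ quadratic form from the $(\ell,r)=(l,m)$ summand---are precisely what is needed, and they reassemble into the stated expressions without gaps.
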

\begin{proof}
The expectations \eqref{eq:MRC-squared-norm}--\eqref{eq:MRC-cross-moment} are straightforward to compute, but the derivations are omitted due to the space limitations.
\end{proof}

By substituting the expressions from Theorem \ref{theorem:MRC-expectations} into \eqref{eq:achievable-SINR}, we obtain closed-form user rates that are achievable using MRC. The asymptotic behavior for large antenna arrays is now easily obtained.

\begin{corollary} \label{corollary:asymptotic-SINR}
If the MRC filter $\vect{v}_{jk}(t) =  \hat{\vect{h}}_{jjk}(t)$ is used, then
\begin{equation} \label{eq:asymptotic-SINR}
\mathrm{SINR}_{jk}(t) = \frac{ p_{jk} \lambda_{jjk}^2 \left(\tilde{\vect{x}}_{jk}^{\Htran}  \vect{D}_{\boldsymbol{\delta}(t)} \boldsymbol{\Psi}^{-1}_j \vect{D}_{\boldsymbol{\delta}(t)}^{\Htran} \tilde{\vect{x}}_{jk}\right)^2 }{\!\!\! \fracSum{(l,m) \neq (j,k)} \!\!\!  p_{lm} \lambda_{jlm}^2 |\tilde{\vect{x}}_{jk}^{\Htran}  \vect{D}_{\boldsymbol{\delta}(t)} \boldsymbol{\Psi}_j^{-1}  \vect{D}_{\boldsymbol{\delta}(t)}^{\Htran} \tilde{\vect{x}}_{lm} |^2 \!+\! \mathcal{O}(\frac{1}{N}) }
\end{equation}
where $\mathcal{O}(\frac{1}{N})$ denotes terms that go to $0$ as $\frac{1}{N}$ or faster as $N \rightarrow \infty$.
\end{corollary}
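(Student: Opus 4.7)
The plan is to substitute the closed-form moments from Theorem~\ref{theorem:MRC-expectations} into the generic SINR expression \eqref{eq:achievable-SINR} and then carry out an order-of-magnitude analysis in $N$, keeping only the leading terms. Throughout, the matrix $\vect{D}_{\boldsymbol{\delta}(t)}\boldsymbol{\Psi}_j^{-1}\vect{D}_{\boldsymbol{\delta}(t)}^{\Htran}$ and the pilot vectors $\tilde{\vect{x}}_{jk}$ do not depend on $N$, so every scaling in $N$ comes from the explicit $N$-factors in Theorem~\ref{theorem:MRC-expectations}.

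First I would catalogue the scaling of each building block. From \eqref{eq:MRC-squared-norm}--\eqref{eq:MRC-first-moment}, both $\mathbb{E}\{\|\vect{v}_{jk}(t)\|^2\}$ and $\mathbb{E}\{\vect{v}_{jk}^{\Htran}(t)\vect{h}_{jjk}(t)\}$ are $\mathcal{O}(N)$, so the numerator of \eqref{eq:achievable-SINR} is
\[
p_{jk} N^{2}\lambda_{jjk}^{4}\bigl(\tilde{\vect{x}}_{jk}^{\Htran}\vect{D}_{\boldsymbol{\delta}(t)}\boldsymbol{\Psi}_{j}^{-1}\vect{D}_{\boldsymbol{\delta}(t)}^{\Htran}\tilde{\vect{x}}_{jk}\bigr)^{2},
\]
which is exactly the $\mathcal{O}(N^{2})$ quantity appearing in the statement (up to the factor $\lambda_{jjk}^{2}$ that will cancel). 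In \eqref{eq:MRC-second-moment} the first two summands are $\mathcal{O}(N)$ while the third is $N(N-1)$ times a quantity independent of $N$, hence $\mathcal{O}(N^{2})$. The distortion-noise expectation \eqref{eq:MRC-cross-moment} and the thermal-noise contribution $\sigma^{2}\xi\,\mathbb{E}\{\|\vect{v}_{jk}(t)\|^{2}\}$ are both $\mathcal{O}(N)$.

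Next I would isolate the $\mathcal{O}(N^{2})$ part of the denominator. Summing the third line of \eqref{eq:MRC-second-moment} over $(l,m)$ and inspecting the $(l,m)=(j,k)$ term, one sees it equals
\[
p_{jk}N(N-1)\lambda_{jjk}^{4}\bigl(\tilde{\vect{x}}_{jk}^{\Htran}\vect{D}_{\boldsymbol{\delta}(t)}\boldsymbol{\Psi}_{j}^{-1}\vect{D}_{\boldsymbol{\delta}(t)}^{\Htran}\tilde{\vect{x}}_{jk}\bigr)^{2},
\]
which differs from the subtracted signal power $p_{jk}|\mathbb{E}\{\vect{v}_{jk}^{\Htran}(t)\vect{h}_{jjk}(t)\}|^{2}$ only by a lower-order term of size $\mathcal{O}(N)$. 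The critical step is therefore this cancellation: after subtraction, the $(l,m)=(j,k)$ contribution drops from $\mathcal{O}(N^{2})$ to $\mathcal{O}(N)$, so the sole surviving $\mathcal{O}(N^{2})$ contribution to the denominator is
\[
\sum_{(l,m)\neq(j,k)}p_{lm}N(N-1)\lambda_{jjk}^{2}\lambda_{jlm}^{2}\bigl|\tilde{\vect{x}}_{jk}^{\Htran}\vect{D}_{\boldsymbol{\delta}(t)}\boldsymbol{\Psi}_{j}^{-1}\vect{D}_{\boldsymbol{\delta}(t)}^{\Htran}\tilde{\vect{x}}_{lm}\bigr|^{2}.
\]

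Finally I would divide numerator and denominator by $N^{2}$, collect the remaining $\mathcal{O}(N)$ terms (from the first two lines of \eqref{eq:MRC-second-moment}, from the distortion noise \eqref{eq:MRC-cross-moment}, from the thermal noise, and from the cancellation residue above) into a single $\mathcal{O}(1/N)$ perturbation in the denominator, and cancel the common factor $\lambda_{jjk}^{2}$. This yields \eqref{eq:asymptotic-SINR}. The main obstacle is not analytic but clerical: one has to verify carefully that no $\mathcal{O}(N^{2})$ term is overlooked, which hinges on checking that the phase-drift matrices $\vect{D}_{\boldsymbol{\delta}(t)}$ and the estimator covariance $\boldsymbol{\Psi}_{j}$ are $N$-independent and that the $(j,k)$ self-interference term cancels exactly against the signal power at order $N^{2}$. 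Once this bookkeeping is done, the corollary follows immediately.
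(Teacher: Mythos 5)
Your proposal is correct and follows exactly the route the paper indicates: substituting the closed-form moments of Theorem~\ref{theorem:MRC-expectations} into \eqref{eq:achievable-SINR}, normalizing by $\lambda_{jjk}^2 N^2$, and keeping the $\mathcal{O}(N^2)$ terms while absorbing the rest into $\mathcal{O}(1/N)$. Your explicit observation that the $(l,m)=(j,k)$ self-interference term cancels against the subtracted signal power up to an $\mathcal{O}(N)$ residue (since $N(N-1)-N^2=-N$) is precisely the bookkeeping the paper leaves implicit.
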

\begin{proof}
This is achieved by dividing all the terms in $\mathrm{SINR}_{jk}(t)$ by $\frac{1}{\lambda_{jjk}^2  N^2}$ and inspecting the asymptotics using Theorem \ref{theorem:MRC-expectations}.
\end{proof}

This corollary shows that the distortion noise and receiver noise vanish as $N \rightarrow \infty$, while the phase-drifts only has a minor asymptotic impact since the numerator and denominator of the SINR in \eqref{eq:asymptotic-SINR} are scaled symmetrically by $\vect{D}_{\boldsymbol{\delta}(t)}$. The terms that remain in the denominator depend on the pilot sequences $\tilde{\vect{x}}_{lm}$; hence, these terms are due to pilot contamination (PC) \cite{Marzetta2010a}; that is, inter-user interference in the estimation phase. Intra-cell PC is typically removed by making the pilot sequences orthogonal in space (i.e., $\tilde{\vect{x}}_{jk}^{\Htran} \tilde{\vect{x}}_{jm} = 0$ for $k \neq m$), which can be achieved by using the columns of a DFT matrix as pilot sequences \cite{Biguesh2004a}. Unfortunately, the phase-drifts caused by hardware imperfections break any spatial pilot orthogonality. Therefore, the only way to remove the intra-cell PC is to assign temporally orthogonal sequences within each cell (e.g., $x_{jk}(k) = \sqrt{p_{jk}}$ and $x_{jk}(t) = 0$ for $t\neq k$). Since temporal orthogonality reduces the total pilot power per user, $\| \tilde{\vect{x}}_{jk} \|^2$, by $1/K$, the simulations in Section \ref{sec:numerical-results} reveal that it is only beneficial for extremely large arrays.
Inter-cell PC can generally not be removed because there are only $B$ orthogonal sequences in the whole network.

\vspace*{-2mm}

\subsection{Scaling Laws on Hardware Imperfections}

\vspace*{-1mm}

The asymptotic results in Corollary \ref{corollary:asymptotic-SINR} reveal that the detrimental impact of hardware imperfections vanishes almost completely as $N$ grows large. This conclusion holds for any fixed values of the parameters $\delta$, $\kappa$, and $\xi$. The next corollary shows that it also holds if the parameters are increased with $N$ in a certain way.

\begin{corollary} \label{cor:scaling-law}
Suppose the hardware imperfection parameters are replaced as $\kappa^2 \mapsto \kappa_{0}^2 N^{\tau_1}$, $\xi \mapsto \xi_{0} N^{\tau_2}$, and $\delta \mapsto \delta_{0} (1+ \log_e(N^{\tau_3}) )$, for some scaling parameters $\tau_1,\tau_2,\tau_3 \geq 0$ and some initial values $\kappa_0,\xi_0,\delta_0 \geq 0$. If \vskip-2mm
\begin{equation} \label{eq:scaling-law}
\max(\tau_1,\tau_2) + \frac{\delta_{0} (t-B)}{2}\tau_3 \leq \frac{1}{2},
\end{equation}
$\mathrm{SINR}_{jk}(t)$ with MRC converges to a non-zero limit as $N \rightarrow \infty$.
\end{corollary}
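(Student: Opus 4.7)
The plan is to substitute the scalings $\kappa^{2}=\kappa_0^{2}N^{\tau_1}$, $\xi=\xi_0 N^{\tau_2}$, and $\delta=\delta_0(1+\tau_3\log N)$ directly into the closed-form MRC expressions of Theorem~\ref{theorem:MRC-expectations} and then perform a term-by-term power counting on the SINR~\eqref{eq:achievable-SINR} as $N\to\infty$. The first observation is that the phase-drift matrix acquires an explicit polynomial dependence on $N$, namely $[\vect{D}_{\boldsymbol{\delta}(t)}]_{ii}=e^{-\delta_0(t-i)/2}N^{-\delta_0\tau_3(t-i)/2}$, with its slowest-decaying diagonal entry sitting at $i=B$ and scaling as $N^{-\delta_0\tau_3(t-B)/2}$; similarly the diagonal of $\vect{X}_{\ell m}$ inherits the factor $N^{\tau_1}$ from $(1+\kappa_0^{2}N^{\tau_1})$, while its off-diagonals stay of order one.

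Next I would analyse $\boldsymbol{\Psi}_j^{-1}$. Its diagonal is $\Theta(N^{\max(\tau_1,\tau_2)})$, coming from the receiver-noise term $\sigma^{2}\xi_{0}N^{\tau_2}$ and the $\kappa_0^{2}N^{\tau_1}|x|^{2}$ contribution of each $\vect{X}_{\ell m}$, whereas its off-diagonals are bounded and moreover damped by $e^{-\delta_0\tau_3|i_1-i_2|/2}$. A Neumann-series expansion $\boldsymbol{\Psi}_j^{-1}=\vect{D}_j^{-1}\sum_{r\geq 0}(-\vect{O}_j\vect{D}_j^{-1})^{r}$, where $\vect{D}_j$ and $\vect{O}_j$ are the diagonal and off-diagonal parts of $\boldsymbol{\Psi}_j$, shows that $\boldsymbol{\Psi}_j^{-1}$ is well-approximated by $\vect{D}_j^{-1}=\Theta(N^{-\max(\tau_1,\tau_2)})$, with strictly lower-order corrections. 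Inserted back, every bilinear form $\tilde{\vect{x}}_{jk}^{\Htran}\vect{D}_{\boldsymbol{\delta}(t)}\boldsymbol{\Psi}_j^{-1}\vect{D}_{\boldsymbol{\delta}(t)}^{\Htran}\tilde{\vect{x}}_{lm}$ appearing in the asymptotic SINR~\eqref{eq:asymptotic-SINR} is of order $N^{-\max(\tau_1,\tau_2)-\delta_0\tau_3(t-B)/2}$, so the numerator and the pilot-contamination denominator scale identically and their ratio tends to a strictly positive constant.

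What remains is to verify that the terms hidden in $\mathcal{O}(\tfrac{1}{N})$ in Corollary~\ref{corollary:asymptotic-SINR}---namely $\sigma^{2}\xi\,\mathbb{E}\{\|\vect{v}_{jk}\|^{2}\}$, the non-coherent pieces of~\eqref{eq:MRC-second-moment}, and the distortion-noise expression~\eqref{eq:MRC-cross-moment}---do not blow up under the scalings. Renormalising each by the PC scale $N^{2}\cdot N^{-2\max(\tau_1,\tau_2)-\delta_0\tau_3(t-B)}$ yields, in the worst case, factors of $N^{-1+\tau_{i}+\max(\tau_1,\tau_2)+\delta_0\tau_3(t-B)}$ with $i\in\{1,2\}$ from the receiver- and first distortion-noise terms, and $N^{-1+2\tau_1+\delta_0\tau_3(t-B)}$ from the $\vect{X}_{lm}$-sandwich term. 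Each exponent is non-positive exactly when $2\max(\tau_1,\tau_2)+\delta_0\tau_3(t-B)\leq 1$, which is the hypothesis of the corollary; hence the residuals remain bounded and the SINR converges to the non-zero PC-limited value established in the previous paragraph.

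The main obstacle is the diagonal approximation of $\boldsymbol{\Psi}_j^{-1}$: its off-diagonal entries decay only polynomially in $N$, so one must rule out that they combine with the off-diagonal reach of $\vect{D}_{\boldsymbol{\delta}(t)}$ to produce a competing dominant contribution indexed by some $i_{1}\neq i_{2}$. The Neumann expansion above controls this, because each extra factor of $\vect{O}_j\vect{D}_j^{-1}$ loses a net power of $N$ that the $\vect{D}_{\boldsymbol{\delta}(t)}$ entries cannot recover; once that reduction is in hand, the rest of the argument is routine bookkeeping of powers of $N$.
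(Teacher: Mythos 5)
Your proposal is correct and is essentially the paper's own argument---the paper's proof is a two-line sketch of exactly this substitution-and-power-counting, and you supply the details it omits (the entrywise order of $\boldsymbol{\Psi}_j^{-1}$ and the verification that every non-pilot-contamination denominator term has non-positive exponent precisely under \eqref{eq:scaling-law}). Two small points to tidy: the bilinear forms scale as $N^{-\max(\tau_1,\tau_2)-\delta_0\tau_3(t-B)}$ since \emph{both} $\vect{D}_{\boldsymbol{\delta}(t)}$ factors contribute a $N^{-\delta_0\tau_3(t-B)/2}$ (your final exponents are consistent with this, not with your intermediate line), and when $\max(\tau_1,\tau_2)=0$ the Neumann series need not converge, but there the required estimate $\boldsymbol{\Psi}_j^{-1}=\Theta(1)$ follows directly from $\lambda_{\min}(\boldsymbol{\Psi}_j)\geq\sigma^2\xi>0$.
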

\begin{proof}
This is achieved by substituting the new parameters into the SINR in \eqref{eq:achievable-SINR}, multiplying all terms  by $1/N^{1-\tau_3 \delta_0 (t-B)}$, and showing that the signal part is non-zero and the denominator is bounded.
\end{proof}

The corollary proves that one can increase the hardware imperfections with the number of antennas. This is a very important result for practical deployments, because it indicates that one can make the cost scale with $N$ at a slower pace than linear by using cheaper hardware. This property has been conjectured in overview articles, such as \cite{Larsson2014a}, and was proved in \cite{Bjornson2014a} using a system model with only additive distortion noise. Corollary \ref{cor:scaling-law} shows explicitly that the conjecture holds also for multiplicative phase-drifts and noise amplifications.

Since Corollary \ref{cor:scaling-law} is derived for MRC, \eqref{eq:scaling-law} provides a \emph{sufficient} scaling condition for any other receive filter that performs better than MRC. The scaling law consists of two terms: $\max(\tau_1,\tau_2)$ and $\frac{\delta_0(t-B)}{2}\tau_3$. The first term $\max(\tau_1,\tau_2)$ shows that the additive distortion noise and noise amplification can be increased simultaneously and independently, while the sum of the two terms manifests a tradeoff between increasing hardware imperfections that cause additive and multiplicative distortions. The system is particularly vulnerable to phase-drifts due to its accumulation, as seen from the second term which increases with $t$ and from that $\delta$ can scale only logarithmically with $N$. We can accept larger variances if the coherence block $T$ is small, which is in line with the results in \cite{Pitarokoilis2012a,Pitarokoilis2014a}.

\section{Numerical Results}
\label{sec:numerical-results}

The analytic results are evaluated in a simulation scenario with 16 cells and wrap-around to avoid edge effects; see Fig.~\ref{figure_simulationscenario}. Each square cell is $250 \times  250$ meters and is divided into 8 virtual sectors, where each sector contains one uniformly distributed UE (with minimum distance $35$ meters). Each sector has an orthogonal pilot sequence, but the same pilot is reused in the corresponding sector of other cells.

The channel attenuations are based on the 3GPP propagation model in \cite{LTE2010b}: $\lambda_{jlk} = 10^{s_{jlk}-1.53}/d_{jlk}^{3.76}$ where $d_{jlk}$ is the distance in meters between BS $j$ and UE $k$ in cell $l$ and $s_{jlk} \sim \mathcal{N}(0,0.25)$ is a realization of the shadow-fading. The transmit powers are $p_{jk} = -47$ dBm/Hz, the thermal noise power is $\sigma^2 = -174$ dBm/Hz, $B=8$ is the pilot sequence length, and the coherence block is $T=500$.

\begin{figure}
\begin{center}
\includegraphics[width=\columnwidth]{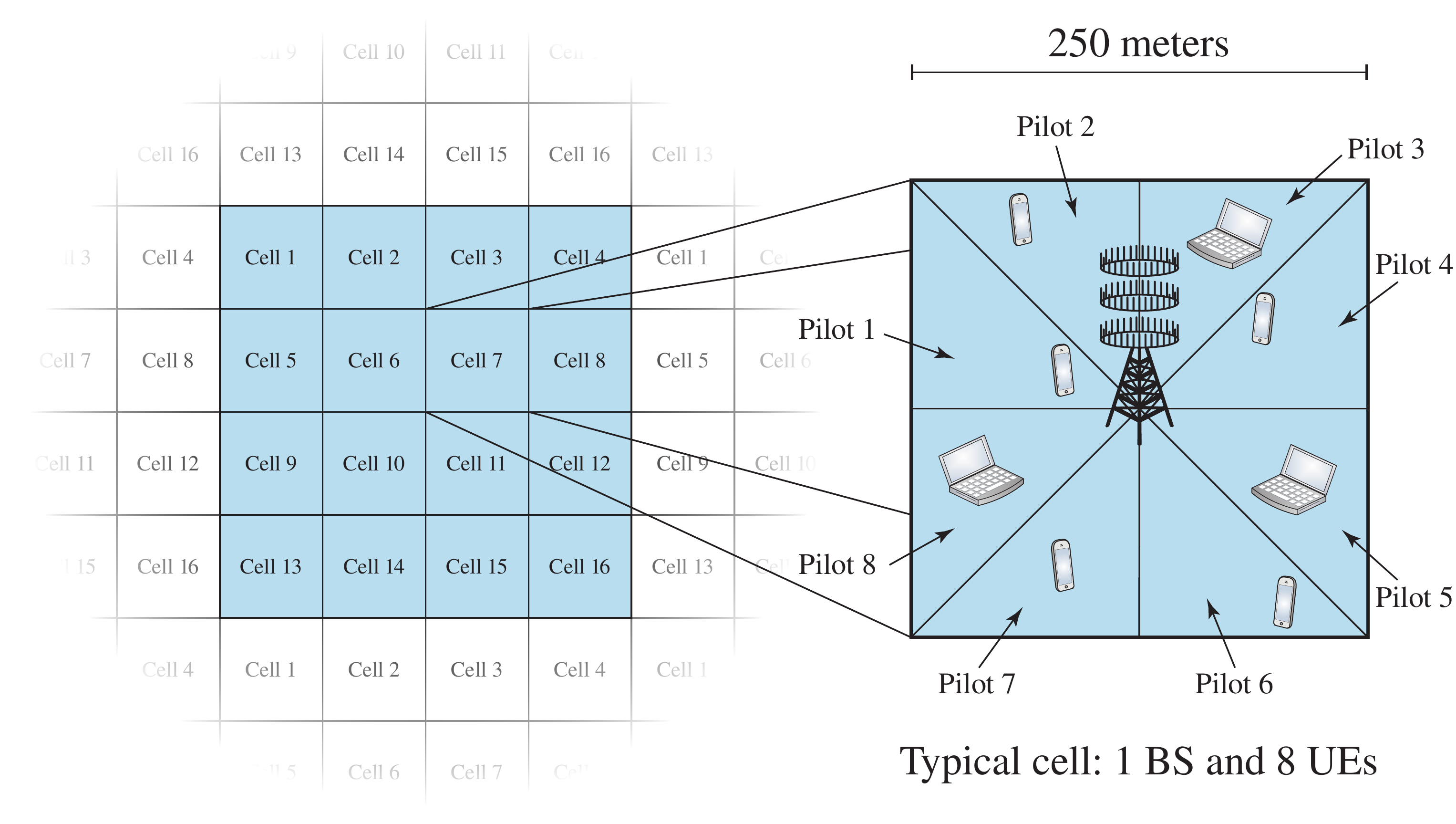}
\end{center}\vskip-6mm
\caption{The simulation scenario considers 16 square cells with wrap-around to avoid edge effects. Each cell is $250 \, \mathrm{m} \, \times \, 250 \, \mathrm{m}$ and consists of 8 UEs uniformly distributed in different parts of the cell.} \label{figure_simulationscenario} \vskip-2mm
\end{figure}

\begin{figure}[t!]
\begin{center}
\includegraphics[width=\columnwidth]{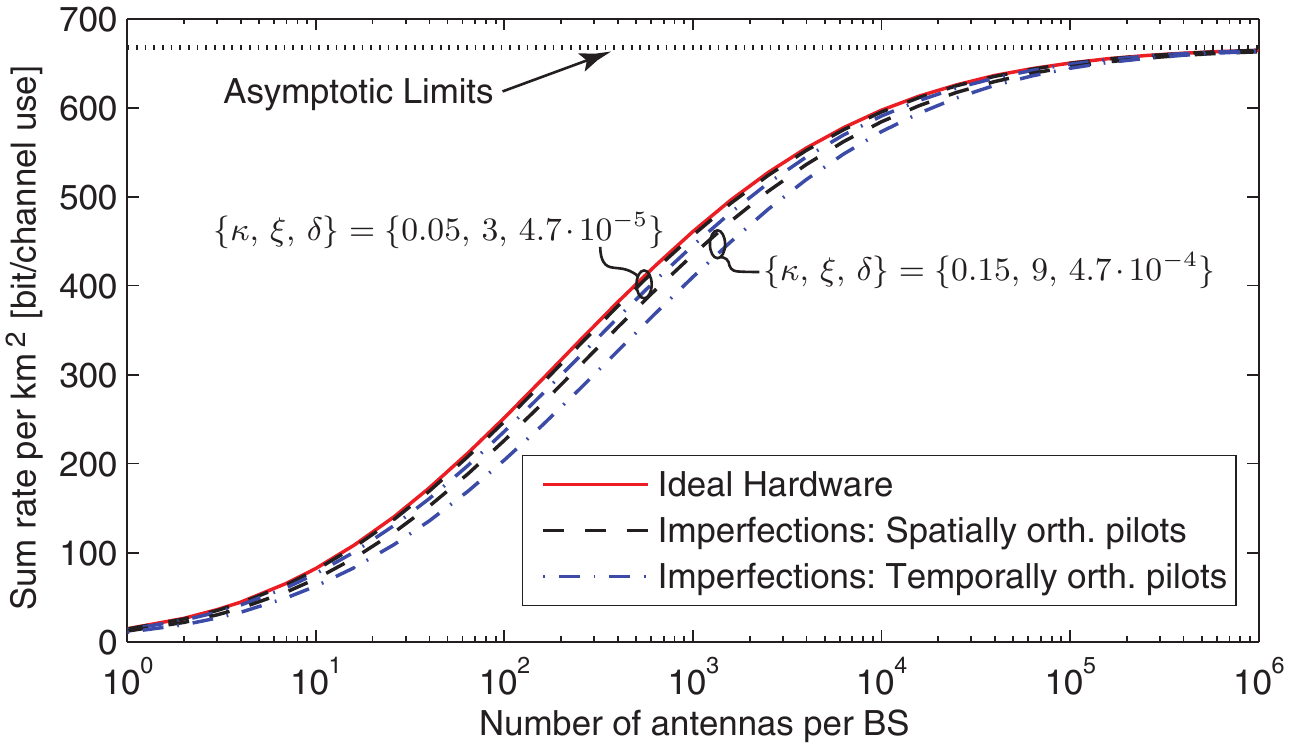}
\end{center}\vskip-6mm
\caption{Sum rate for different numbers of antennas, different hardware imperfections, and spatially or temporally orthogonal pilots.} \label{figure_simulationfigure1} \vskip-3mm
\end{figure}

We start by validating the asymptotic behaviors for fixed imperfections. Fig.~\ref{figure_simulationfigure1} shows the sum rate of all users (in the $1 \, \mathrm{km}^2$ area) as a function of the number of antennas $N$. The performance is given for ideal hardware and two types of hardware imperfections that are specified in the figure. The simulation shows that the convergence to the upper limit in Corollary \ref{corollary:asymptotic-SINR} is very slow---we used logarithmic scale on the horizontal axis because a million antennas is required for convergence. We observe that the sum rate reduces with hardware imperfections, but the loss is small and vanishes asymptotically.

Two types of pilot sequences are considered in Fig.~\ref{figure_simulationfigure1}: spatially orthogonal pilots selected from a DFT matrix \cite{Biguesh2004a} and temporally orthogonal pilots. As discussed in Section \ref{subsec:user-rates}, spatially orthogonal pilots is a better choice at practical $N$, although the limit in Corollary \ref{corollary:asymptotic-SINR} might be slightly larger for temporally orthogonal pilots.

Next, we focus on the practical range of $1 \leq N \leq 500$ in Fig.~\ref{figure_simulationfigure2}. We illustrate the scaling law from Corollary \ref{cor:scaling-law} by considering $\{ \kappa_0, \, \xi_0, \, \delta_0 \} \!=\! \{ 0.05, \, 3, \, 4.7 \cdot 10^{-5} \}$ and different $\tau_1,\tau_2,\tau_3$ which are specified in  Fig.~\ref{figure_simulationfigure2}. As expected, the combinations that satisfy the scaling law give minor performance losses, while the bottom curve goes to zero since the law is not fulfilled. The curves for MRC were generated using the analytical results of Section \ref{sec:performance-analysis} and match the marker symbols, which are the outputs of a Monte Carlo simulator.

The MRC filter was considered in Section \ref{sec:performance-analysis} since its low computational complexity is attractive for massive MIMO topologies. MRC provides a performance baseline for other receive filters which typically have higher complexity. In Fig.~\ref{figure_simulationfigure2} we also consider the filter \vskip-5mm
\begin{equation}
\vect{v}_{jk}^{\textrm{MMSE}}(t) \!=\!
\left( \sum_{l=1}^{L} \sum_{m=1}^{K} p_{lm} (\vect{G}_{jlm} \!+\! \kappa^2 \vect{D}_{\vect{G}_{jlm}} )  \!+\! \sigma^2 \xi \vect{I}_M  \! \right)^{\!\!-1} \!\! \hat{\vect{h}}_{jjk}(t)
\end{equation}
where $\vect{G}_{jlm} = \hat{\vect{h}}_{jlm}(t)  \hat{\vect{h}}_{jlm}^{\Htran}(t) + \vect{C}_{jlm}$ and $\vect{D}_{\vect{G}_{jlm}}$ is a diagonal matrix where the diagonal elements are the same as in $ \vect{G}_{jlm}$. This is an approximate minimum MSE (MMSE) filter that maximizes \eqref{eq:achievable-SINR} for a fixed channel realization. As seen from Fig.~\ref{figure_simulationfigure2}, the MMSE filter provides higher performance than the MRC filter. Interestingly, the losses due to hardware imperfections are similar but are somewhat larger for MMSE filters. This is because the MMSE filter exploits spatial interference suppression which is sensitive to imperfections.

\begin{figure}[t!]
\begin{center}
\includegraphics[width=\columnwidth]{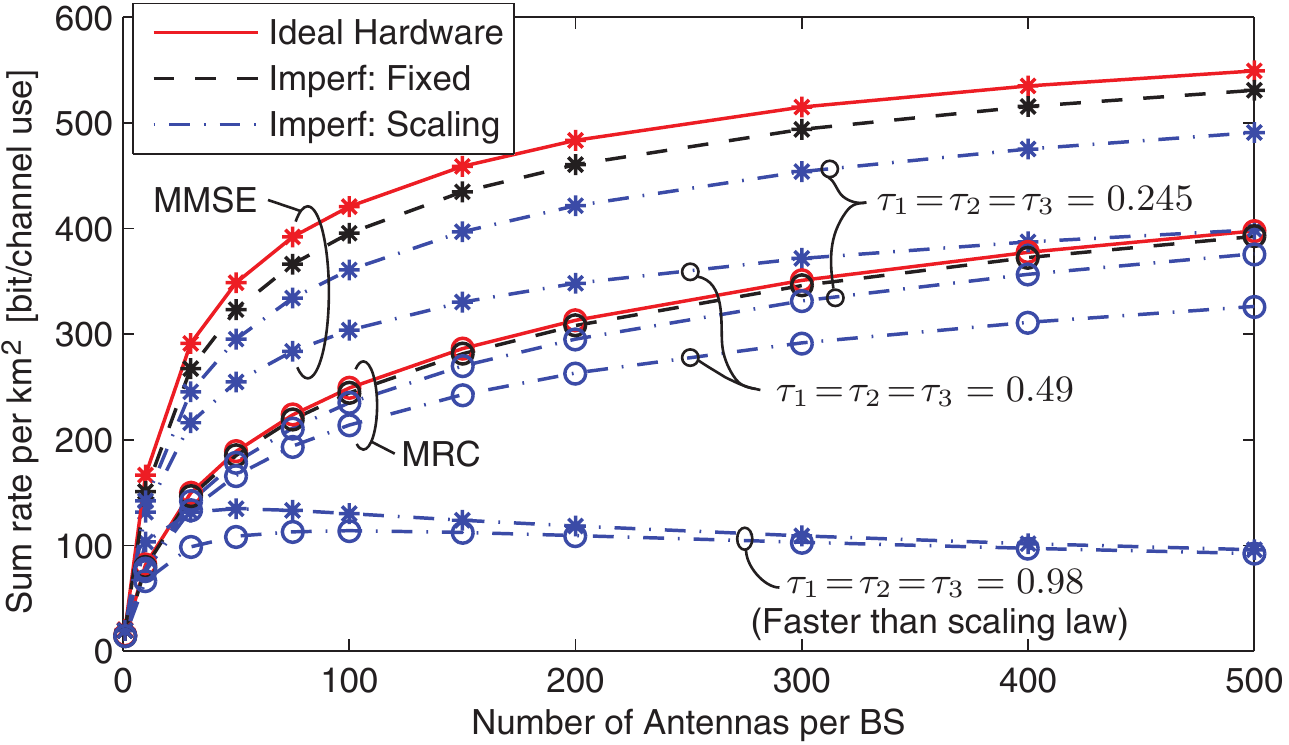}
\end{center}\vskip-6mm
\caption{Sum rate with MRC and MMSE filtering with ideal hardware, fixed imperfections, and imperfections that increase with $N$.} \label{figure_simulationfigure2} \vskip-4mm
\end{figure}

\section{Conclusion}

\vskip-1mm

A prerequisite for practical deployment of massive MIMO systems is that each antenna element in the large BS arrays is manufactured using low-cost components, which unfortunately are prone to hardware imperfections. In this work, we derived a scaling law that proves that massive MIMO systems are robust to hardware imperfections. This is a property that has been conjectured in prior works but only proved for simplified channel models with only additive distortion noise. We considered a more accurate uplink model with multiplicative phase-drifts, additive distortion noise, and noise amplifications. We derived an LMMSE channel estimator and the achievable user rates under MRC. Based on this model, our closed-form scaling law manifests how fast the hardware imperfections can increase with $N$, if non-zero user rates should be achieved. The simulation validates that the rate losses are small as compared to having ideal hardware. The scaling law reveals that the variance of the distortion noise and receiver noise can increase simultaneously as $\sqrt{N}$, but the scaling should be slower if also the phase-drift variance increases with $N$ (it can only increase logarithmically). Interestingly, the scaling results hold for other receive filters, such as the approximate MMSE filter.

\newpage

\bibliographystyle{IEEEbib}
\bibliography{IEEEabrv,refs}

\end{document}